\begin{document}
\title{Decidable Exponentials in Nonassociative Noncommutative Linear Logic}
%
%
\author{Eben Blaisdell\inst{1}}
\authorrunning{E. Blaisdell}
%
\institute{Department of Mathematics, University of Pennsylvania
\email{ebenb@sas.upenn.edu}}
\maketitle              
\begin{abstract}
The use of exponentials in linear logic greatly enhances its expressive power.  In this paper we focus on nonassociative noncommutative multiplicative linear logic, and systematically explore modal axioms $\K$, $\T$, and $4$ as well as the structural rules of contraction and weakening.  We give sequent systems for each subset of these axioms; these enjoy cut elimination and have analogues in more structural logics.  We then appeal to work of Buli\'nska \cite{bulinska2009} extending work of Buszkowski \cite{buszkowski2005} to show that several of these logics are $\mathsf{PTIME}$ decidable and generate context free languages as categorial grammars.  This contrasts associative systems where similar logics are known to generate all recursively enumerable languages, and are thus in particular undecidable \cite{kanovich2020soft}

\keywords{Nonassociative  \and Lambek Calculus \and Linear Logic \and Polynomial Time.}
\end{abstract}

\section{Introduction}
Lambek introduced a noncommutative calculus \cite{Lambek1958} for the analysis of natural language syntax.  Three years later, Lambek, lead by linguistic examples, conceived a further restricted nonassociative calculus \cite{Lambek1961OnTC}.  In these frameworks, grammaticality of sentences corresponds with the provability of particular sequents.  Both of these calculi generate context free languages \cite{Pentus1993LambekGA,Kandulski1988TheEO}, and have been extended in several orthogonal ways by connectives that increase expressivity \cite{moortgat2007,kanovich2020soft,moortgat1997categorial}.

The original exponential of linear logic \cite{Girard:1987uq}, viewed as a modality acts as $\Sf$, i.e. it validates the following modal axioms:

\[
\K:!(A\to B)\to !A\to !B
\qquad
\T:!A\to A
\qquad
\4:!A\to !!A
\]

However, the choice of $\Sf$ is not a direct consequence of the view of bang as arbitrary reuse of a formula; exponentials satisfying different modal axioms have been widely studied \cite{girard1998,lafont2004}.

In linear logic \cite{Girard:1987uq}, the exponential licenses contraction and weakening, which can be axiomatized as:

\[
\C:!A\to !A\tensor !A
\qquad
\W:!A\to \one
\]

In linear logic, these allow a formula to be processed zero or more times as needed.  In noncommutative or nonassociative systems, these allow for arbitrary use of a formula locally.


While in intuitionistic linear logic these axioms are easily encoded in following the sequent rules.

\[
\infer[\C]{\Gamma,!A\seq C}{\Gamma,!A,!A\seq C}
\qquad
\infer[\W]{\Gamma,!A\seq C}{\Gamma\seq C}
\]

It is nontrivial to implement these rules well in nonassociative and noncommutative systems proving $\K$, as we will note later.

\section{The Systems}
This work grew out of work on \cite{blaisdell22multimodal}, and features similar formulas, rules, and systems.

\begin{definition}[Formulas]
Let a multiplicative exponential formula be the closure of the propositional variables $p_i$ for $i\in\mathbb{N}$ and the unit $\one$, under the unary symbol $!$ and the binary symbols $\tensor,\to,$ and $\la$.  A multiplicative formula is a multiplicative exponential formula with no occurrence of $!$.

Let $\F$ be the set of all multiplicative formulas, and $!\F$ be the set of all multiplicative exponential formulas.
\end{definition}

\begin{definition}[Structured Sequents]
A structure is a binary tree of formulas.  Put differently, it is defined by the grammar

\[
\Gamma \quad ::= \quad (\Gamma,\Gamma) \quad | \quad !\F.
\]

\noindent allowing additionally for an empty structure.

A structured sequent is written $\Gamma\seq C$ where $\Gamma$ is a structure and $C$ is a formula.
\end{definition}

\begin{figure}[t]
{\sc Propositional rules}
\[
\infer[\tensor L]{\Rx{A\tensor B}\seq C}{\Rx{(A, B)}\seq C}
\qquad
\infer[\tensor R]{(\Gamma,\Delta) \seq A\tensor B}{\Gamma\seq A &
\Delta\seq B}
\qquad
\infer[\to L]{\Rx{(\Delta,A\to B)} \seq C}{\Delta\seq A &
\Rx{B}\seq C}
\]
\[
\infer[\to R]{\Gamma\seq A\to B}{(A,\Gamma)\seq B}
\qquad
\infer[\la L]{\Rx{(B\la A,\Delta)} \seq C}{\Delta\seq A &
\Rx{B}\seq C}
\qquad
\infer[\la R]{\Gamma\seq B\la A}{(\Gamma,A)\seq B}
\]
\[
\infer[\one L]{\Rx{\one} \seq A}{\Rx{} \seq A}
\qquad
\infer[\one R]{ \seq \one}{}
\]
{\sc Modal Rules}
\[
\infer[!L]{\Rx{!A}\seq C}{\Rx{A}\seq C}
\qquad
\infer[!R]{!A\seq !C}{A\seq C}
\qquad
\infer[!R\K]{!\Gamma\seq !C}{\Gamma\seq C}
\]
\[
\infer[!R\4]{!A\seq !C}{!A\seq C}
\qquad
\infer[!R\K 4]{!\Gamma\seq !C}{!^*\Gamma\seq C}
\]
{\sc Structural Rules}
\[
\infer[\C]{\Rx{!A}\seq C}{\Rx{(!A,!A)}\seq C}
\qquad
\infer[\CK]{\Rx{!\Delta}\seq C}{\Rx{(!\Delta,!\Delta)}\seq C}
\qquad
\infer[\W]{\Rx{!A}\seq C}{\Rx{}\seq C}
\]
{\sc Initial and cut rules}
\[
\infer[\init]{A\seq A}{} 
\qquad
\infer[\cut]{\Rx{\Delta}\seq C}{\Delta\seq A & \Rx{A}\seq C}
\]
\caption{All rules for the set of systems defined.}\label{fig:rules}
\end{figure}

\begin{definition}[$\MacLL$]
Let \emph{multiplicative nonassociative noncommutative linear logic}, or $\MacLL$, be the logic with the rules $(\init),(\tensor L),(\tensor R),(\to L),(\to R),(\la L),$ and $(\la R)$.
\end{definition}

\begin{definition}[$\SMMacLL$]
For each set $S\subseteq\{c,w\}$ of labels corresponding to structural rules, and set $M\subseteq\{k,t,4\}$ of labels corresponding to modal axioms, we define $\SMMacLL$ to be the logic including

\begin{itemize}
    \item the rules of $\MacLL$,
    \item $(!R)$ if $M-t=\emptyset$ or $\{4\}$,
    \item $(!R\K)$ if $M-t=\{k\}$
    \item $(!R\4)$ if $M-t=\{4\}$,
    \item $(!R\K\4)$ if $M-t=\{k,4\}$,
    \item $(!L)$ if $t\in M$
    \item $(\C)$ if $c\in S$ and $k\not\in M$
    \item $(\C\K)$ if $c\in S$ and $k\in M$
    \item $(\W)$ if $w\in S$
\end{itemize}
\end{definition}

In $(!R\K\4)$, by $!^*\Gamma$, we mean $\Gamma$ with some subset of the formulas wrapped in bang $(!)$.

Note the peculiarity of the $(\C\K)$ rule.  Such a rule is needed because a system proving $\K$ and the single-formula $(\C)$ would not admit cut.  For example, $(!a\tensor !b)\seq (a\tensor b)\tensor (a\tensor b)$ is provable with

\[
\infer[\cut]{!a\tensor !b\seq (a\tensor b)\tensor (a\tensor b)}{
    \infer[\tensor L]{!a\tensor !b\seq !(a\tensor b)}{
    \infer[!R\K]{(!a,!b)\seq !(a\tensor b)}{
    \infer[\tensor R]{(a,b)\seq a\tensor b}{
        \infer[\init]{a\seq a}{} &
        \infer[\init]{b\seq b}{}
    }}} &
    \infer[\C]{!(a\tensor b)\seq(a\tensor b)\tensor (a\tensor b)}{
    \infer[\tensor R]{(!(a\tensor b),!(a\tensor b))\seq(a\tensor b)\tensor (a\tensor b)}{
        \infer[!L]{!(a\tensor b)\seq a\tensor b}{
            \infer[\init]{a\tensor b\seq a\tensor b}{}
        } &
        \infer[!L]{!(a\tensor b)\seq a\tensor b}{
            \infer[\init]{a\tensor b\seq a\tensor b}{}
        }
    }
    }
}
\]

\noindent but is not provable by $(\C)$ alone without $(\cut)$.

We will see that $(\CK)$ eliminates $(\cut)$.  However, it is slightly stronger than the axiom $\C$.  For example, in general $\C$ does not imply $!a\tensor !b\seq(!a\tensor !b)\tensor(!a\tensor !b)$, but this is provable with $(\CK)$.  However, in systems proving $\T$ and $\4$, we have that $(\CK)$ follows from $(\C)$ as follows:

\[
\infer[\cut]{\Rx{!\Delta}\seq C}{
    \infer[!R\K\4]{!\Delta\seq !\bigotimes!\Delta}{
    \infer=[\tensor R,\init]{!\Delta\seq \bigotimes!\Delta}{}
    } &
    \infer[\C]{\Rx{!\bigotimes!\Delta}\seq C}{
    \infer=[!L]{\Rx{(!\bigotimes!\Delta,!\bigotimes!\Delta)}\seq C}{
    \infer=[\tensor L]{\Rx{(\bigotimes!\Delta,\bigotimes!\Delta)}\seq C}{
    \Rx{(!\Delta,!\Delta)}\seq C
    }
    }}
}
\]

\section{Cut Admissibility}
The subformula property is vital to our proofs of decidability and context-freeness, so we pay close attention to the admissibility of $(\cut)$ in all of the logics defined above.

\begin{lemma}
If $\Delta\seq A$ and $\Rx{A}\seq C$ have (cut-free) proofs in $\SMMacLL$, then $\Rx{\Delta}\seq C$ has a (cut-free) proof in $\SMMacLL$.
\end{lemma}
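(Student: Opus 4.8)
The plan is to prove cut admissibility by a double induction, following the standard Gentzen strategy adapted to the structured (tree-shaped) sequents and the several modal and structural rules of the system. The outer induction is on the complexity of the cut formula $A$, and the inner induction is on the sum of the heights of the two given derivations. For each cut instance I would distinguish cases according to the last rules applied in the two premises $\Delta\seq A$ (the left premise, where $A$ is on the right) and $\Rx{A}\seq C$ (the right premise, where $A$ is distinguished inside the context). I would first dispatch the \emph{principal} cases, where $A$ is the principal formula of the last rule on both sides: these are the key logical cases ($\tensor$, $\to$, $\la$, $\one$) and the modal cases, and they reduce the cut to one or two cuts on the immediate subformulas of $A$, which are handled by the outer induction hypothesis.

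Next I would treat the \emph{commutative} cases, in which $A$ is not principal in at least one premise. Here the strategy is to permute the cut upward past the last rule of that premise, producing cuts on derivations of strictly smaller total height, and appeal to the inner induction hypothesis. Because the sequents are binary trees rather than lists, I must be careful that the context notation $\Rx{\cdot}$ and the side structures $\Gamma,\Delta$ behave correctly under these permutations; the tree structure actually helps here, since each rule acts on a designated subtree and the cut can be pushed into the branch containing $A$. The structural rules $(\C)$, $(\CK)$, $(\W)$ and the left exponential rule $(!L)$ require particular attention, since cutting against a contraction duplicates the cut, and cutting against weakening erases it; in both situations the duplicated or erased cut is on the same formula but against a strictly smaller derivation, so the inner hypothesis still applies.

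The main obstacle, and the reason the peculiar rule $(\CK)$ was introduced, is the interaction of contraction with the $\K$-style right rules $(!R\K)$ and $(!R\K\4)$. When the cut formula $A=!C'$ is introduced on the left by $(!R\K)$ (so $\Delta=!\Gamma$) and is contracted on the right, the naive single-formula contraction $(\C)$ does not suffice to restore cut-freeness, exactly as the example with $!a\tensor!b$ in the excerpt shows; the multi-formula contraction $(\CK)$ is precisely what lets us duplicate the entire structure $!\Gamma$ and distribute the two copies of the cut. I would therefore verify carefully that in each subsystem the contraction rule that is actually present ($(\C)$ when $k\notin M$, $(\CK)$ when $k\in M$) matches the right-introduction rule that is present, so that the principal modal-versus-contraction case closes.

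I expect the genuinely delicate bookkeeping to be in the $(!R\K\4)$ case, where the premise carries $!^*\Gamma$ (an arbitrary rebanging of $\Gamma$) and must be matched against both $(\C\K)$ duplication and possible further modal rules on the right; keeping the induction measure decreasing across the interplay of rebanging, contraction, and the $4$-style rules is where I would concentrate the argument, while the remaining cases are routine permutations.
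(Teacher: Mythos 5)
Your overall architecture (outer induction on the complexity of the cut formula, inner induction on a size measure, with principal and commutative cases) matches the paper's up to the exponential case, and you correctly identify both why $(\CK)$ rather than $(\C)$ must be present when $k\in M$ and that $(!R\K\4)$ is the delicate right rule. The genuine gap is in how you propose to close the case where the cut formula $!A$ is itself contracted. You write that the duplicated cut is ``on the same formula but against a strictly smaller derivation, so the inner hypothesis still applies.'' That is true of only one of the two cuts you create: permuting a cut on $!A$ past a contraction of $!A$ produces two nested cuts, and the outer one has as its right premise the \emph{conclusion of the inner cut}, a freshly constructed derivation whose height is not bounded by anything in your induction measure (complexity of $A$, sum of heights). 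This is the classical failure point of the naive Gentzen argument in the presence of contraction; it must be repaired either by a multicut/mix rule (which you do not introduce) or by a global argument, and as stated your measure simply does not decrease in the contraction-versus-cut case.

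The paper takes the global route, following the ``deep'' cut elimination of Bra\"{u}ner and de Paiva as adapted to noncommutative systems by Kanovich et al.: when the cut formula is $!A$, one traces every occurrence of $!A$ in the proof of $\Rx{!A}\seq C$ upward to the rule that introduced its bang (one of $(\init)$, $(\W)$, $(!L)$, $(!R)$, $(!R\K)$, $(!R\K\4)$), replaces $!A$ by $!\Delta$ \emph{throughout} that proof --- this replacement is exactly where $(\CK)$ is needed when $\Delta$ is not a single formula, as you anticipated --- and then patches each bang-introduction site with a cut on $A$ or on the premise $\Delta'$, i.e.\ a cut of strictly lower complexity. Only the outer induction on formula complexity is invoked there, so no height bookkeeping is required and contraction causes no trouble: the contractions on $!A$ just become contractions on $!\Delta$. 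To make your proposal work you would either have to adopt this global replacement for the exponential case, or strengthen your inner induction to a multicut that simultaneously cuts all contracted copies of $!A$.
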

\begin{proof}
We prove this by triple induction on

\begin{itemize}
    \item $\kappa$, the complexity of $A$, and
    \item $\delta$, the number of rule applications in the proofs of $\Delta\seq A$ and $\Rx{A}\seq C$.
\end{itemize}

We proceed casewise by the final rules in the proofs of $\Delta\seq A$ and $\Rx{A}\seq C$ if the top level connective of $A$ is not the exponential, and then handle the exponential case more carefully.

\subsubsection{Axioms}

If either proof is $(\init)$ alone, then the other proof suffices directly.

\[
\infer[\cut]{\Delta\seq A}{
    \Delta\seq A &
    \infer[\init]{A\seq A}{}
}
\qquad
\infer[\cut]{\Rx{A}\seq C}{
    \infer[\init]{A\seq A}{} &
    \Rx{A}\seq C
}
\]

\subsubsection{Left Nonprincipal}
If a left rule $\mathcal{R}$ is the final rule in the proof of $\Delta\seq A$, then we have the following general form, potentially with an extra assumption $\mathcal{A}$.

\[
\infer[\cut]{\Rx{\Delta}\seq C}{
    \infer[\mathcal{R}]{\Delta\seq A}{
        \mathcal{A}^? & 
        \Delta'\seq A
    } &
    \Rx{A}\seq C
}
\]

This can be rewritten to the following.

\[
\infer[\mathcal{R}]{\Rx{\Delta}\seq C}{
    \mathcal{A}^? &
    \infer[\cut]{\Rx{\Delta'}\seq C}{
        \Delta'\seq A &
        \Rx{A}\seq C
    }
}
\]

Here the formula complexity is maintained, but the number of rule applications is one fewer.

\subsubsection{Right Nonprincipal}
The case of nonprincipal rules in $\Rx{A}\seq C$ requires more granularity.

For a rule $\mathcal{R}$ that is $(\tensor L)$, $(!L)$, $(\C)$, $(\C\K)$, $(\W)$, or $(\one L)$ applied nonprincipally, or $(\to L)$ or $(\ra L)$ applied independently of $A$, potentially with another assumption $\mathcal{A}$, we have the following setup.

\[
\infer[\cut]{\Rx{\Delta}\seq C}{
    \Delta\seq A &
    \infer[\mathcal{R}]{\Rx{A}\seq C}{
        \mathcal{A}^? &
        \Gamma'\Ex{A}\seq C
    }
}
\]

This can be rewritten to the following.

\[
\infer[\mathcal{R}]{\Rx{\Delta}\seq C}{
    \mathcal{A}^? &
    \infer[\cut]{\Gamma'\Ex{\Delta}\seq C}{
        \Delta\seq A &
        \Gamma'\Ex{A}\seq C
    }
}
\]

Here, $\kappa$ is same in all cases, but $\delta$ decreases by one, so we apply the inductive hypothesis.

We need also to consider the applications of $(\to L)$ and $(\la L)$ where $A$ moves to the left assumption; up to symmetry this is the following.

\[
\infer[\cut]{\Rx{(\Px{\Delta},D\to E)}\seq C}{
    \Delta\seq A &
    \infer[\to L]{\Rx{(\Px{A},D\to E)}\seq C}{
        \Px{A}\seq D &
        \Rx{E}\seq C
    }
}
\]

This can be rewritten to the following.

\[
\infer[\to L]{\Rx{(\Px{\Delta},D\to E)}\seq C}{
    \infer[\cut]{\Px{\Delta}\seq D}{
        \Delta\seq A &
        \Px{A}\seq D
    } &
    \Rx{E}\seq C
}
\]

Once more, $\kappa$ is maintained while $\delta$ decreases by one.

Up to symmetry, we have the following rewrite for $(\tensor R)$.

\[
\infer[\cut]{(\Rx{\Delta},\Pi)\seq B\tensor C}{
    \Delta\seq A &
    \infer[\tensor R]{(\Rx{A},\Pi)\seq B\tensor C}{
        \Rx{A}\seq B &
        \Pi\seq C
    }
}
\quad\rwto
\]
\[
\infer[\tensor R]{(\Rx{\Delta},\Pi)\seq B\tensor C}{
    \infer[\cut]{\Rx{\Delta}\seq B}{
        \Delta\seq A &
        \Rx{A}\seq B
    } &
    \Pi\seq C
}
\]

Additionally, we have the following rewrite for $(\to R)$.

\[
\infer[\cut]{\Rx{\Delta}\seq B\to C}{
    \Delta\seq A &
    \infer[\to R]{\Rx{A}\seq B\to C}{
    (B,\Rx{A})\seq C
    }
}
\quad\rwto\quad
\infer[\to R]{\Rx{\Delta}\seq B\to C}{
\infer[\cut]{(B,\Rx{\Delta})\seq C}{
    \Delta\seq A &
    (B,\Rx{A})\seq C
}}
\]
\end{proof}

The case of $(\la R)$ is symmetric.  In all of these cases, $\kappa$ is maintained, but $\delta$ is decreased.

This exhausts all applications of rules to $\Rx{A}\seq C$ where $A$ is nonprincipal.

\subsubsection{Nonexponential Principal Pairs}
All remaining cases have $A$ principal in the final rule of the proofs of $\Delta\seq A$ and $\Rx{A}\seq C$.  As usual, we consider the top level connective of $A$ and show how to rewrite the proof by appealing to the inductive hypothesis.

In the case of tensor, we have the following

\[
\infer[\cut]{\Rx{(\Delta,\Pi)}\seq C}{
    \infer[\tensor R]{(\Delta,\Pi)\seq A\tensor B}{
        \Delta\seq A &
        \Pi\seq B
    } &
    \infer[\tensor L]{\Rx{A\tensor B}\seq C}{
    \Rx{(A,B)}\seq C
    }
}
\quad\rwto
\]
\[
\infer[\cut]{\Rx{(\Delta,\Pi)}\seq C}{
    \Delta\seq A &
    \infer[\cut]{\Rx{(A,\Pi)}\seq C}{
        \Pi\seq B &
        \Rx{(A,B)}\seq C
    }
}
\]

Both cut formulas have lower complexity, so we appeal twice to the inductive hypothesis.

Next we consider if the top level connective is $\to$.  Here we have the following rewrite.

\[
\infer[\cut]{\Rx{(\Delta,\Pi)}\seq C}{
    \infer[\to R]{\Pi\seq A\to B}{
    (A,\Pi)\seq B
    } &
    \infer[\to L]{\Rx{(\Delta,A\to B)}\seq C}{
        \Delta\seq A &
        \Rx{B}\seq C
    }
}
\quad\rwto
\]
\[
\infer[\cut]{\Rx{(\Delta,\Pi)}\seq C}{
    \Delta\seq A &
    \infer[\cut]{\Rx{(A,\Pi)}\seq C}{
        (A,\Pi)\seq B &
        \Rx{B}\seq C
    }
}
\]

Again, both applications of cut have smaller $\kappa$, so we appeal twice to the inductive hypothesis.

For the constant $\one$ we have

\[
\infer[\cut]{\Rx{}\seq C}{
    \infer[\one R]{\seq\one}{} &
    \infer[\one L]{\Rx{\one}\seq C}{
        \Rx{}\seq C
    }
}
\]

\noindent and the top proof of $\Rx{}\seq C$ suffices, removing the $(\cut)$.

\subsubsection{Exponential}
Lastly, we need only consider if the cut formula is of the form $!A$ and is active in the final rules of both proofs.  The only rules in which a banged succedent is active are $(\init)$, $(!R)$, $(!R\4)$, $(!R\K)$, and $(!R\K\4)$; for convenience call this class of rules, excluding $(\init)$, $\mathcal{RI}$.  We exclude $(\init)$ because we have already considered the case where the first rule above the left proof is $(\init)$.  The subset of the rules in $\mathcal{RI}$ that are in $\SMMacLL$ depends on the labels in $M$.

Following the deep cut elimination of Bra\"{u}ner and dePaiva \cite{brauneretal1996,brauneretal1998}, applied to noncommutative logics by Kanovich et. al. \cite{kanovich2017,kanovichetal2021}, the track all of the occurrences of $!A$ in the proof of $\Rx{!A}\seq C$ upwards to the introduction of their `bang', i.e. $(!)$.  The rules that can introduce bang on an antecedent are $(\init)$, $(\W)$, $(!L)$, $(!R)$, $(!R\K)$, and $(!R\K\4)$; for convenience label this class of rules $\mathcal{LI}$.

Hence, we are in the following scenario.

\[
\infer[\cut]{\Rx{!\Delta}\seq C}{
    \infer[\mathcal{RI}]{!\Delta\seq !A}{
    \Delta'\seq A
    } &
    \deduce{\Rx{!A}\seq C}{
        \deduce{\vdots}{
            \mathcal{LI}\text{ applications}
        }
    }
}
\]

Because of $(\C)$ or $(\C\K)$, there may be multiple applications of $\mathcal{LI}$ rules.  Note that these applications may be on the same or different branches of a proof.

Additionally, the form of $\Delta$ and $\Delta'$ both depend on the $\mathcal{RI}$ rule being applied, which itself depends on the labels in $M$ for the considered $\SMMacLL$.  For example, if $k\not\in M$, then $\Delta$ is one formula.  If the rule is $(!R\K)$ then $\Delta'\equiv\Delta$, but if it is $(!R\K4)$, then $\Delta'\equiv !^*\Delta$.

We modify the proof of $\Rx{!A}\seq C$ by replacing all occurrences of $!A$ with $!\Delta$, and then cutting in $A$ and/or adding certain rules elsewhere as necessary at each $\mathcal{LI}$ application.  This proof will only have cuts of lower complexity, and we then appeal to the inductive hypothesis several times.

It is not immediate for this replacement of a formula $!A$ by a structure $!\Delta$ to be valid generally.  If $k\not\in M$, then $\Delta$ must be a single formula, and we are fine.  If $k\in M$, then we use the fact that $(\C\K)$ allows for the contraction of entire substructures rather than only individual formulas, and the replacement is valid.

Consider each $\mathcal{LI}$ rule individually.

If the rule is $(\init)$, then in the modified proof, the sequent $!\Delta\seq !A$ appears, and we replace its proof with the proof ending in the $\mathcal{RI}$ application.  Symbolically,

\[
\infer[\init]{!A\seq !A}
\quad\rwto\quad
\infer[\mathcal{RI}]{!\Delta\seq !A}{
\infer{\Delta'\seq A}{\vdots}
}
\]

If the rule is $(\W)$, then $w\in S$, and we instead apply $(\W)$ individually to each formula in $!\Delta$.

\[
\infer[\W]{\Px{!A}\seq D}{
\Px{}\seq D
}
\quad\rwto\quad
\infer=[\W]{\Px{!\Delta}\seq D}{
\Px{}\seq D
}
\]

Depending on the $\mathcal{RI}$ rule applied, $\Delta'\equiv\Delta$ or $\Delta'\equiv !^*\Delta$.  In either case, we can achieve $\Delta'$ from $!\Delta$ by some, potentially zero, number of applications of $(!L)$. Therefore, for applications of $(!L)$ to $!A$, we can make the following replacement.

\[
\infer[!L]{\Px{!A}\seq D}{
\Px{A}\seq D
}
\quad\rwto\quad
\infer[\cut]{\Px{!\Delta}\seq D}{
    \infer=[!L]{!\Delta\seq A}{
        \infer{\Delta'\seq A}{\vdots}
    } &
    \Px{A}\seq C
}
\]

If the $\mathcal{LI}$ application is $(!R)$, then by the definition of $\SMMacLL$, we can conclude that the $\mathcal{RI}$ rule must be $(!R)$ or $(!R\4)$, and in particular $\Delta\equiv B$ for some $B$.  In both cases we have the following respective replacements.

\[
\infer[!R]{!A\seq !D}{
    A\seq D
}
\quad\rwto\quad
\infer[!R]{!B\seq !D}{
\infer[\cut]{B\seq D}{
    \infer{B\seq A}{\vdots} &
    A\seq D
}}
\]
\[
\infer[!R]{!A\seq !D}{
    A\seq D
}
\quad\rwto\quad
\infer[!R\4]{!B\seq !D}{
\infer[\cut]{!B\seq D}{
    \infer{!B\seq A}{\vdots} &
    A\seq D
}}
\]

The case of $(!R\K)$ is much the same.  The $\mathcal{RI}$ rule must have been $(!R\K)$.  We have the following rewrite.

\[
\infer[!R\K]{!\Px{!A}\seq !D}{
\Px{A}\seq D
}
\quad\rwto\quad
\infer[!R\K]{!\Px{!\Delta}\seq !D}{
\infer[\cut]{\Px{\Delta}\seq D}{
    \infer{\Delta\seq A}{\vdots} &
    \Px{A}\seq D
}
}
\]

Lastly, we need to consider applications of $(!R\K\4)$ where the bang on $!A$ is introduced.  Here, the $\mathcal{RI}$ rule must have also been $(!R\K\4)$.  In this case we have the following rewrite.

\[
\infer[!R\K]{!\Px{!A}\seq !D}{
!^*\Px{A}\seq D
}
\quad\rwto\quad
\infer[!R\K]{!\Px{!\Delta}\seq !D}{
\infer[\cut]{!^*\Px{!^*\Delta}\seq D}{
    \infer{!^*\Delta\seq A}{\vdots} &
    !^*\Px{A}\seq D
}
}
\]

This exhausts all cases, proving the lemma.  As a consequence, we have the following.

\begin{theorem}[Cut admissibility]
If a sequent is provable in $!^S_M\MacLL+(\cut)$, then it has a proof in $!^S_M\MacLL$.
\end{theorem}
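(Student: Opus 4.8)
The plan is to bootstrap the single-cut result of the preceding Lemma into full cut admissibility by an outer induction on the number of applications of $(\cut)$ occurring in a given proof in $!^S_M\MacLL+(\cut)$. The base case is immediate: a proof using zero cuts is already a proof in $!^S_M\MacLL$, since these two systems differ only by the presence of the rule $(\cut)$.

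For the inductive step, suppose we are given a proof $\pi$ containing at least one application of $(\cut)$. Because $\pi$ is a finite tree, it contains at least one \emph{topmost} cut, that is, an application of $(\cut)$ such that neither of its two premise-derivations contains any further application of $(\cut)$; such a cut is obtained, for instance, by choosing any cut minimal with respect to the ancestor ordering. Fix such a topmost cut, concluding some sequent $\Rx{\Delta}\seq C$ from premises $\Delta\seq A$ and $\Rx{A}\seq C$. By the choice of a topmost cut, both of these premises have cut-free proofs, so the Lemma applies directly and produces a cut-free proof of $\Rx{\Delta}\seq C$.

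I would then splice this cut-free derivation into $\pi$ in place of the subderivation ending in the chosen cut. Since the Lemma introduces no surviving applications of $(\cut)$, the total number of cuts in the resulting proof is exactly one fewer than in $\pi$. Applying the induction hypothesis to this smaller proof yields a genuinely cut-free proof of the original endsequent, which is the desired proof in $!^S_M\MacLL$.

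The main subtlety is not in this outer argument, which is routine, but in the legitimacy of invoking the Lemma: the outer induction measure decreases only because the Lemma's conclusion is \emph{fully} cut-free, and this in turn rests on its internal induction on the complexity $\kappa$ of the cut formula and the size $\delta$ of the two subproofs. In particular, the principal and exponential cases of the Lemma replace a single cut by several cuts of strictly smaller $\kappa$ (or of equal $\kappa$ but smaller $\delta$), and it is the well-foundedness of the lexicographic order on $(\kappa,\delta)$ that guarantees all of these residual cuts are themselves eliminated, so that nothing remains to be counted against the outer induction.
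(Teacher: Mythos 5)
Your proposal is correct and matches the paper's intent: the paper states the theorem as an immediate consequence of the preceding Lemma, and your outer induction on the number of cuts, eliminating a topmost cut via the Lemma and splicing in the resulting cut-free derivation, is exactly the standard bootstrapping argument the paper leaves implicit.
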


\section{Light Systems}
We first consider systems that do not admit the axiom $\K$.  The modal terminology would be non-normal; we refer to them as light as in light linear logic \cite{girard1998}.  Fix some $S\subseteq\{c,w\}$ and $M\subseteq\{t,4\}$ that we will refer to throughout this section; in particular $k\not\in M$.

\subsection{Axiomatizability}
Our strategy is to reduce provability in $\SMMacLL$ to consequence in $\MacLL$.  Informally, we will show that finitely many nonlogical axioms are necessary to encode the behaviour of the exponential in $\SMMacLL$.

By Buli\'nska's work \cite{bulinska2009}, which is closely related to work of Buskowski \cite{buszkowski2005}, we have that the consequence relation is decidable in polynomial time, and that any fixed finite set of axioms yields context free languages when viewed as a categorial grammar.

Let $\quot{\cdot}:!\F\to\mathbb{N}$ be an injective efficiently computable encoding of multiplicative exponential formulas into natural numbers.

We map $!\F$ into $\F$ in the following way:

\begin{align*}
    \tau(p_i) & := p_{2i+1} \\
    \tau(A\tensor B) & := \tau(A)\tensor\tau(B) \\
    \tau(A\to B) & := \tau(A)\to\tau(B) \\
    \tau(B\la A) & := \tau(B)\la\tau(A) \\
    \tau(!A) & := p_{2\quot{A}} \\
\end{align*}

We extend this to structures in the obvious way; that is,

\[
\tau((\Gamma,\Delta)):=(\tau(\Gamma),\tau(\Delta))
\]

We now find an appropriate set of axioms $\Phi$ such that a sequent $\Gamma\seq C$ has

\[
\SMMacLL\vdash\Gamma\seq C
\quad\iff\quad
\MacLL+\Phi\vdash\tau(\Gamma)\seq\tau(C) 
\]

This $\Phi$ will need to depend on the sequent $\Gamma\seq C$, must be small, and in particular must be finite.

We define some technical functions to find the appropriate $\Phi$.

\begin{definition}
Let $\y$ be the \em{yield} of a structure; i.e. $\y(\Gamma)$ is the set of formulas in $\Gamma$ where

\begin{align*}
    \y((\Gamma,\Delta)) & := \y(\Gamma)\cup\y(\Delta) \\
    \y(A) & := \{A\}
\end{align*}
\end{definition}

\begin{definition}
Let $\sbf(A)$ be the subformulas of $A$, and for a set of formulas $S$, let $\sbf(S):=\bigcup_{A\in S}\sbf(A)$.
\end{definition}

\begin{definition}
Let $X\subseteq !\F$ be a set of formulas.  Define $\Phi^X_n$ to be the following set of sequents corresponding to provable formulas in $\SMMacLL$.

\begin{align*}
    \Phi^X_0 := 
    &\{p_{2\quot A}\seq \tau(A):!A\in X\} &\text{ if } t\in M\\
    \cup &\{p_{2\quot A}\seq p_{2\quot A}\tensor p_{2\quot A}:!A\in X\} &\text{ if } c\in S\\
    \cup &\{p_{2\quot A}\seq \one:!A\in X\} &\text{ if } w\in S\\
    \Phi^X_{n+1} := &\Phi^X_n\cup\{p_{2\quot A}\seq p_{2\quot B}:!A,!B\in X &\text{ and }\\
    &\MacLL+\Phi^X_n\vdash \tau(A)\seq \tau(B) &\text{ if } 4\not\in M\text{ or }\\
    &\MacLL+\Phi^X_n\vdash p_{2\quot A}\seq \tau(B) &\text{ if } 4\in M\}\\
    \Phi^X_{\infty}:=&\bigcup_n \Phi^X_n
\end{align*}
\end{definition}

Note that $\Phi^X_{\infty}$ is finite if $X$ is finite.

\begin{theorem}
Let $\Gamma\seq C$ be a sequent and $X$ a set of formulas.  If $\MacLL+\Phi^X_{\infty}$ proves the translation $\tau(\Gamma)\seq\tau(C)$, then $\SMMacLL$ proves $\Gamma\seq C$.
\end{theorem}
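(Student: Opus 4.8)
The plan is to build a left inverse to the translation $\tau$ and push any $\MacLL+\Phi^X_\infty$-derivation through it, rebuilding an $\SMMacLL$-derivation. I would define a connective homomorphism $\sigma:\F\to\,!\F$ on atoms by $\sigma(p_{2i+1}):=p_i$, $\sigma(p_{2\quot A}):=\,!A$ for each $A\in\,!\F$, and $\sigma(p_{2m}):=p_0$ for any even atom not of the latter form (these never actually occur), extended by $\sigma(\one):=\one$ and commutation with $\tensor,\to,\la$, and then to structures componentwise. By construction $\sigma$ commutes with substitution of a formula into the hole of a context and satisfies $\sigma(\tau(\Gamma))=\Gamma$ and $\sigma(\tau(C))=C$. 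It therefore suffices to prove the stronger claim that for every structure $\Gamma'$ and formula $C'$, if $\MacLL+\Phi^X_\infty\vdash\Gamma'\seq C'$ then $\SMMacLL\vdash\sigma(\Gamma')\seq\sigma(C')$; instantiating at $\Gamma'=\tau(\Gamma)$ and $C'=\tau(C)$ yields the theorem.

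Because $\Phi^X_{n+1}$ is defined by reference to derivability in $\MacLL+\Phi^X_n$, I would prove this claim by a stratified induction. Let $P(n)$ be the displayed implication with $\Phi^X_\infty$ replaced by $\Phi^X_n$, and prove $P(n)$ for all $n$ by (primary) induction on $n$, with a secondary induction on the height of the $\MacLL+\Phi^X_n$-derivation. The logical rules of $\MacLL$, the rule $(\init)$, and $(\one L),(\one R)$ all commute with $\sigma$ since it is a connective homomorphism, so each is discharged by applying the secondary hypothesis to the premises and reapplying the same rule in $\SMMacLL$. A $(\cut)$ is handled by applying the secondary hypothesis to both premises and then invoking cut admissibility of $\SMMacLL$ (the preceding theorem), using $\sigma(\Gamma'[A'])=\sigma(\Gamma')[\sigma(A')]$; defining $\sigma$ totally on all atoms is precisely what lets the cut case go through, since the cut formula need not lie in the image of $\tau$.

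The heart of the argument is the discharge of the nonlogical leaves $\phi\in\Phi^X_n$, where one must exhibit $\SMMacLL\vdash\sigma(\phi)$. The three families in $\Phi^X_0$ translate under $\sigma$ to exactly the sequents witnessed by the rules present in each case: $p_{2\quot A}\seq\tau(A)$ becomes $!A\seq A$, provable by $(!L)$ when $t\in M$; $p_{2\quot A}\seq p_{2\quot A}\tensor p_{2\quot A}$ becomes $!A\seq\,!A\tensor\,!A$, provable from two instances of $(\init)$ by $(\tensor R)$ and $(\C)$ when $c\in S$ (recall $k\notin M$ here, so single-formula contraction is the rule present); and $p_{2\quot A}\seq\one$ becomes $!A\seq\one$, provable by $(\one R)$ and $(\W)$ when $w\in S$. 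For an atom-to-atom leaf $p_{2\quot A}\seq p_{2\quot B}$ first appearing in $\Phi^X_{m+1}$ with $m<n$, its defining condition supplies a $\MacLL+\Phi^X_m$-derivation of $\tau(A)\seq\tau(B)$ (if $4\notin M$) or of $p_{2\quot A}\seq\tau(B)$ (if $4\in M$); applying the primary hypothesis $P(m)$ to that derivation yields $\SMMacLL\vdash A\seq B$, respectively $\SMMacLL\vdash\,!A\seq B$, and one application of $(!R)$, respectively $(!R\4)$ — both available in the relevant light systems — produces $!A\seq\,!B=\sigma(p_{2\quot A})\seq\sigma(p_{2\quot B})$. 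The theorem then follows since any single $\MacLL+\Phi^X_\infty$-derivation uses only finitely many axiom leaves and hence already lies in some $\MacLL+\Phi^X_n$.

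I expect the main obstacle to be exactly the well-foundedness of this recursion: one must verify that the justification of each newly added atom-to-atom axiom refers only to strictly earlier stages $\Phi^X_m$ with $m<n$, so that the appeal to $P(m)$ is legitimate and no circularity arises. The stratification of $\Phi^X_n$ by its stage index is what guarantees this, and keeping that index as the primary induction variable (with derivation height secondary) is the key to organizing the proof correctly.
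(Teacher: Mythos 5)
Your proof is correct and follows the same overall architecture as the paper's: a primary induction on the stage index $n$ of $\Phi^X_n$ (justified by the finiteness of the axiom leaves in any one derivation), a nested induction on the $\MacLL+\Phi^X_n$-derivation, direct commutation for the logical rules, and discharge of the nonlogical axioms by exactly the $\SMMacLL$ derivations you exhibit (including the correct pairing of $(!R)$ with the $4\notin M$ condition $\tau(A)\seq\tau(B)$ and $(!R\4)$ with the $4\in M$ condition $p_{2\quot A}\seq\tau(B)$). The one place you genuinely diverge is the $(\cut)$ case, and your device is worth noting. The paper first invokes Buli\'nska's weakened subformula property to restrict attention to $\MacLL+\Phi^X_0$ proofs whose propositional variables all occur in the endsequent, which forces the cut formula to lie in the image of $\tau$ and lets $(\cut)$ translate to $(\cut)$ directly. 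You instead totalize a left inverse $\sigma$ of $\tau$ on all atoms and prove the stronger statement that \emph{every} provable sequent maps under $\sigma$ to an $\SMMacLL$-provable one; the cut formula then needs no special status. Your route buys independence from the normal-form result of \cite{bulinska2009} for this direction of the equivalence (that result is still needed elsewhere for the complexity and context-freeness arguments), at the modest cost of an arbitrary choice of $\sigma$ on atoms outside the image of $\tau$; both arguments ultimately rest on the cut admissibility theorem for $\SMMacLL$ to convert the resulting derivation-with-cuts into a genuine $\SMMacLL$ proof.
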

\begin{proof}
Since any proof uses finitely many axioms, it is sufficient to prove the claim for all $\Phi^X_n$, and we do this by induction on $n$.

We start with $\Phi^X_0$.  By \cite{bulinska2009}, we may restrict our attention to $\MacLL+\Phi^X_0$ proofs containing only propositional variables occurring in the endsequent $\Gamma\seq C$; this is a weakened form of the subformula property.

We proceed by a nested induction on such proofs in $\MacLL+\Phi^X_0$, translating the final rule.  The most interesting cases are $(\cut)$ and the nonlogical axioms.

If the last rule is $(\init)$, since $\tau$ is injective the claim follows trivially.

The propositional rules translate directly.  For example,

\[
\infer[\to L]{\tau(\Gamma)\Ex{(\tau(\Delta),\tau(A)\to\tau(B))}\seq C}{
    \tau(\Delta)\seq \tau(A) &
    \tau(\Gamma)\Ex{\tau(B)}\seq\tau(C)
}
\quad\rwto\quad
\infer[\to L]{\Rx{(\Delta,A\to B)}\seq C}{
    \Delta\seq A &
    \Rx{B}\seq C
}
\]

For a translation of $(\cut)$, it is a priori possible that the cut formula may not be the translation of a formula.  The general form is

\[
\infer[\cut]{\tau(\Gamma)\Ex{\tau(\Delta)}\seq \tau(C)}{
    \tau(\Delta)\seq A &
    \tau(\Gamma)\Ex{A}\seq \tau(C)
}
\]

In other words, if $\quot{\cdot}$ is not surjective, $\tau$ need not be surjective.  More specifically, some (even index) propositional variables may not be in the image of $\tau$.  However, by our earlier assumption, every propositional variable in the proof is in the endsequent $\tau(\Gamma)\seq\tau(C)$, which is in the image of $\tau$; consequently, every variable in the cut formula $A$ is the translation of an exponential multiplicative formula.  Thus, $A\equiv\tau(A')$ for some formula $A'$.  Then $(\cut)$ translates to $(\cut)$:

\[
\infer[\cut]{\tau(\Gamma)\Ex{\tau(\Delta)}\seq \tau(C)}{
    \tau(\Delta)\seq \tau(A') &
    \tau(\Gamma)\Ex{\tau(A')}\seq \tau(C)
}
\quad\rwto\quad
\infer[\cut]{\Rx{\Delta}\seq C}{
    \Delta\seq A' &
    \Rx{A'}\seq C
}
\]

Lastly, we check if the final (and thus only) rule is a nonlogical axiom of $\MacLL+\Phi_{!\F}$.  In other words, we check that these axioms hold.  We do this schemawise.

If $t\in M$:
\[
\infer[\Phi^X_0]{p_{2\quot A}\seq\tau(A)}{}
\quad\rwto\quad
\infer[\der]{!A\seq A}{
\infer[\init]{A\seq A}{}
}
\]

If $c\in S$:
\[
\infer[\Phi^X_0]{p_{2\quot A}\seq p_{2\quot A}\tensor p_{2\quot A}}{}
\quad\rwto\quad
\infer[\C]{!A\seq !A\tensor !A}{
\infer[\tensor R]{(!A,!A)\seq !A\tensor !A}{
    \infer=[\init]{!A\seq !A}{} &
    \infer=[\init]{!A\seq !A}{}
}}
\]

If $w\in S$:
\[
\infer[\Phi^X_0]{p_{2\quot A}\seq \one}{}
\quad\rwto\quad
\infer[\W]{!A\seq\one}{
\infer[\one R]{\seq\one}{}}
\]

This exhausts the rules, concluding the base case.

Now we assume the claim for $n$ to prove it for $n+1$, proceeding by nested induction on $\MacLL+\Phi^X_{n+1}$ proofs.  For all final rules also appearing in $\MacLL+\Phi^X_0$, we translate as above.  Thus, we need only consider nonlogical axioms of the form $p_{2\quot A}\seq p_{2\quot B}$.

If $4\not\in M$ and $p_{2\quot A}\seq p_{2\quot B}\in\Phi^X_{n+1}$, then we have $\MacLL+\Phi^X_n\vdash p_{2\quot A}\seq\tau(B)$.  By the inductive hypothesis, this tells us that $\SMMacLL\vdash !A\seq B$, so we translate as follows:

\[
\infer[\Phi^X_{n+1}]{p_{2\quot A}\seq p_{2\quot B}}{}
\quad\rwto\quad
\infer[!R\4]{!A\seq !B}{\infer{!A\seq B}{\vdots}}
\]

In the case $4\in M$, if $p_{2\quot A}\seq p_{2\quot B}\in\Phi^X_{n+1}$ then $\MacLL+\Phi^X_n\vdash\tau(A)\seq\tau(B)$, which by the inductive hypothesis tells us that $\SMMacLL\vdash A\seq B$, so we again translate using this implied $\SMMacLL$ proof:

\[
\infer[\Phi^X_{n+1}]{p_{2\quot A}\seq p_{2\quot B}}{}
\quad\rwto\quad
\infer[!R]{!A\seq !B}{\infer{A\seq B}{\vdots}}
\]

Thus for any $n$, the axioms of $\Phi^X_n$ are sound with respect to $\SMMacLL$.
\end{proof}

We now prove a converse result.

\begin{definition}
We define $\Phi^{\Gamma,C}_n$ to be $\Phi^X_n$ where $X$ is the set of all subformulas in the sequent $\Gamma\seq C$.  Formally,

\[
    \Phi^{\Gamma,C}_n:=\Phi^{\sbf(\y(\Gamma)\cup\{C\})}_n
\]
\end{definition}

Note that $\sbf(\y(\Gamma)\cup\{C\})$ is finite, and thus $\Phi_{\Gamma,C}$ is finite.  This is sufficient to prove everything provable in $\SMMacLL$.

\begin{theorem}
Let $\Gamma\seq C$ be provable in $\MacLL$, then $\tau(\Gamma)\seq \tau(C)$ is provable in $\MacLL+\Phi^{\Gamma,C}_{\infty}$. 
\end{theorem}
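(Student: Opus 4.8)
The plan is to establish the remaining direction of the announced equivalence, reading the hypothesis as provability in $\SMMacLL$ (with $\MacLL$ alone the statement would be immediate, since then $\Gamma\seq C$ contains no bangs and $\tau$ merely relabels variables). The idea is to translate a cut-free $\SMMacLL$ derivation of $\Gamma\seq C$ into a $\MacLL+\Phi^{\Gamma,C}_\infty$ derivation of $\tau(\Gamma)\seq\tau(C)$. First I would apply the cut admissibility theorem to fix a cut-free proof of $\Gamma\seq C$ in $\SMMacLL$. Because $k\notin M$ in this section, the only rule capable of introducing a bang that is not already a subformula of its conclusion, namely $(!R\K\4)$, is absent; hence this cut-free proof enjoys the subformula property. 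In particular every banged formula $!A$ occurring in it is a subformula of the endsequent, so $!A\in\sbf(\y(\Gamma)\cup\{C\})=X$ and likewise $A\in X$. This is exactly what makes the axioms I will need eligible to lie in $\Phi^{\Gamma,C}_\infty=\Phi^X_\infty$.

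I would then induct on the cut-free proof, proving the slightly strengthened statement that some finite level $n$ already has $\MacLL+\Phi^X_n\vdash\tau(\Gamma)\seq\tau(C)$; since the $\Phi^X_n$ are increasing with union $\Phi^X_\infty$, taking the maximum level at each binary rule threads $n$ through the induction and sidesteps any apparent circularity in the recursive definition of $\Phi^X_{n+1}$. The axiom $(\init)$ and all the propositional and unit rules translate to the identically named rule, since $\tau$ commutes with $\tensor$, $\to$, $\la$ and with structure formation and satisfies $\tau(\one)=\one$; no cut is needed here. Using $(\cut)$ in the target is legitimate because $\MacLL+\Phi$ is closed under cut, as it was used freely in the previous theorem.

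The structural and left-modal rules are discharged by cutting against the level-zero axioms, each of whose eligibility is the fact that the relevant $!A$ lies in $X$. For $(!L)$, available when $t\in M$, I would cut a $\Phi^X_n$-proof of $\tau(\Rx{A})\seq\tau(C)$ against $p_{2\quot A}\seq\tau(A)\in\Phi^X_0$ to obtain $\tau(\Rx{!A})\seq\tau(C)$, using $\tau(!A)=p_{2\quot A}$. For $(\C)$, when $c\in S$, I would first apply $(\tensor L)$ to the translated premise, turning the pair $(p_{2\quot A},p_{2\quot A})$ into $p_{2\quot A}\tensor p_{2\quot A}$, and then cut against $p_{2\quot A}\seq p_{2\quot A}\tensor p_{2\quot A}\in\Phi^X_0$. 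For $(\W)$, when $w\in S$, I would insert $(\one L)$ and cut against $p_{2\quot A}\seq\one\in\Phi^X_0$.

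The right-modal rules are where the new level-$(n+1)$ axioms are created, and matching each to the correct clause of the definition of $\Phi^X_{n+1}$ is the step I expect to require the most care. For $(!R)$, whose premise is $A\seq C$, the induction supplies $\MacLL+\Phi^X_n\vdash\tau(A)\seq\tau(C)$; when $4\notin M$ this is exactly the defining condition placing $p_{2\quot A}\seq p_{2\quot C}$ into $\Phi^X_{n+1}$, and that axiom is $\tau(!A)\seq\tau(!C)$. For $(!R\4)$, whose premise is $!A\seq C$, the induction instead supplies $\MacLL+\Phi^X_n\vdash p_{2\quot A}\seq\tau(C)$, which is the $4\in M$ clause, again yielding $p_{2\quot A}\seq p_{2\quot C}\in\Phi^X_{n+1}$. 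The main obstacle is thus to verify that each right rule delivers a premise of precisely the shape demanded by its clause ($\tau(A)\seq\tau(C)$ versus $p_{2\quot A}\seq\tau(C)$) while the subformula property guarantees $!A,!C\in X$, so that the freshly generated sequent is genuinely admitted as an axiom of $\Phi^X_{n+1}\subseteq\Phi^X_\infty$.
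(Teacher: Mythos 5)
Your proposal is correct and follows essentially the same route as the paper: induction on a cut-free $\SMMacLL$ proof, translating propositional rules directly, discharging $(!L)$, $(\C)$, $(\W)$ by cutting against the level-zero axioms, and matching $(!R)$ and $(!R\4)$ to the $4\notin M$ and $4\in M$ clauses generating $\Phi^X_{n+1}$. The only (harmless, presentational) difference is that you fix the subformula set $X$ of the endsequent once and thread a finite level $n$ through the induction, whereas the paper applies the inductive hypothesis with the local axiom set $\Phi^{A,B}_\infty$ and then enlarges it to $\Phi^{!A,!B}_\infty$ by monotonicity and finiteness.
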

\begin{proof}
We prove this by induction on the height of minimal cutfree $\SMMacLL$ proofs.  We consider the last rule casewise.

If the last rule is an axiom, the result is immediate:

\[
\infer[\init]{A\seq A}{}
\quad\rwto\quad
\infer[\init]{\tau(A)\seq\tau(A)}{}
\]

If the last rule is a non-exponential propositional rule, we translate directly and appeal to the induction hypothesis; for example,

\[
\infer[\to L]{\Rx{(\Delta,A\to B)}\seq C}{
    \Delta\seq A &
    \Rx{B}\seq C
}
\quad\rwto\quad
\infer[\to L]{\tau(\Gamma)\Ex{(\tau(\Delta),\tau(A)\to\tau(B))}\seq C}{
    \tau(\Delta)\seq \tau(A) &
    \tau(\Gamma)\Ex{\tau(B)}\seq\tau(C)
}
\]


Most interesting is if an exponential rule is last.  If the last rule is $(!L)$, then $t\in M$ and we have

\[
\infer[!L]{\Rx{!A}\seq C}{\Rx{A}\seq C}
\quad\rwto\quad
\infer[\cut]{\tau(\Gamma)\Ex{p_{2\quot A}}\seq \tau(C)}{
    \infer[\Phi^{\Rx{!A},C}_{\infty}]{p_{2\quot A}\seq \tau(A)}{} &
    \tau(\Gamma)\Ex{\tau(A)}\seq \tau(C)
}
\]

\noindent since $!A$ is a subformula (and in fact a formula) in $\Rx{!A}\seq C$, and thus $p_{2\quot A}\seq \tau(A)\in\Phi^{\Rx{!A},C}_0\subseteq\Phi^{\Rx{!A},C}_{\infty}$.

The case of $(!R)$ relies more strongly on the construction of $\Phi_{\Gamma,C}$.

Say the last rule is

\[
\infer[!R]{!A\seq !B}{A\seq B}.
\]

Thus $4\not\in M$ and $\SMMacLL\vdash A\seq B$.  By the inductive hypothesis, this tells us that

\[
\MacLL+\Phi^{A,B}_{\infty}\vdash \tau(A)\seq\tau(B).\]

By expanding the axiom set, we also then have,

\[
\MacLL+\Phi^{!A,!B}_{\infty}\vdash \tau(A)\seq\tau(B).
\]

By finiteness, for some $n$ we then also have

\[
\MacLL+\Phi^{!A,!B}_n\vdash \tau(A)\seq\tau(B).
\]

\noindent which by construction implies

\[
p_{2\quot A}\seq p_{2\quot B}\in\Phi^{!A,!B}_{n+1}\subseteq\Phi^{!A,!B}_{\infty}.
\]

Therefore, in this case, the translation is provable by a nonlogical axiom.

\[
\infer[\Phi^{!A,!B}_{\infty}]{p_{2\quot A}\seq p_{2\quot B}}{}
\]

If the last rule application is

\[
\infer[!R4]{!A\seq !B}{!A\seq B}.
\]

\noindent then $4\in M$ and $\SMMacLL\vdash !A\seq B$.  We similarly have then that $p_{2\quot A}\seq p_{2\quot B}\in\Phi^{!A,!B}_{\infty}$, so again the inductive claim follows by a nonlogical axiom.

If the final rule is $(\C)$, then $c\in S$, and we can translate as follows.

\[
\infer[\C]{\Rx{!A}\seq C}{\Rx{(!A,!A)}\seq C}
\quad\rwto
\]
\[
\infer[\cut]{\tau(\Gamma)\Ex{p_{2\quot A}}\seq \tau(C)}{
    \infer[\Phi^{\Gamma,C}_{\infty}]{p_{2\quot A}\seq p_{2\quot A}\tensor p_{2\quot A}}{} &
    \infer[\tensor L]{\tau(\Gamma)\Ex{p_{2\quot A}\tensor p_{2\quot A}}\seq \tau(C)}{
        \tau(\Gamma)\Ex{(p_{2\quot A},p_{2\quot A})}\seq \tau(C)
    }
}
\]

\noindent using the fact that $!A$ is a subformula (and in fact a formula) in $\Rx{!A}$.

Lastly, if the final rule is $(\W)$, then $w\in S$, and we can carry out the following translation.

\[
\infer[\W]{\Rx{!A}\seq C}{\Rx{}\seq C}
\quad\rwto
\]
\[
\infer[\cut]{\tau(\Gamma)\Ex{p_{2\quot A}}\seq \tau(C)}{
    \infer[\Phi^{\Gamma,C}_{\infty}]{p_{2\quot A}\seq \one}{} &
    \infer[\one L]{\tau(\Gamma)\Ex{\one}\seq \tau(C)}{
        \tau(\Gamma)\Ex{}\seq \tau(C)
    }
}
\]

This exhausts all cases, proving the claim.
\end{proof}

We can put these results together to obtain a soundness an completeness theorem.

\begin{corollary}[Soundness and Completeness]
Let $\Gamma\seq C$ be a sequent, then $\SMMacLL$ proves $\Gamma\seq C$ if and only if $\MacLL+\Phi^{\Gamma,C}_{\infty}$ proves $\tau(\Gamma)\seq\tau(C)$.
\end{corollary}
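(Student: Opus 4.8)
The plan is to obtain this corollary as an immediate conjunction of the two theorems just established, with the free parameter $X$ instantiated appropriately to the endsequent. The biconditional splits into two directions, and each is handed to us by exactly one of those theorems, so there is no new induction or rule analysis to perform.

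For the forward direction --- that $\SMMacLL\vdash\Gamma\seq C$ implies $\MacLL+\Phi^{\Gamma,C}_{\infty}\vdash\tau(\Gamma)\seq\tau(C)$ --- I would simply invoke the second theorem, which is stated in precisely this form for the subformula-generated axiom set $\Phi^{\Gamma,C}_{\infty}$. For the backward direction --- that $\MacLL+\Phi^{\Gamma,C}_{\infty}\vdash\tau(\Gamma)\seq\tau(C)$ implies $\SMMacLL\vdash\Gamma\seq C$ --- I would apply the first theorem. That theorem is stated for an arbitrary set $X$ and its conclusion does not reference $X$, so the only thing to do is select $X$ so that its axiom set coincides with $\Phi^{\Gamma,C}_{\infty}$. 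By the definition $\Phi^{\Gamma,C}_n:=\Phi^{\sbf(\y(\Gamma)\cup\{C\})}_n$, taking $X:=\sbf(\y(\Gamma)\cup\{C\})$ yields $\Phi^X_{\infty}=\Phi^{\Gamma,C}_{\infty}$ identically, and the first theorem then gives the desired implication.

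I expect essentially no obstacle here, as the corollary is pure bookkeeping over the two prior results. The single point demanding any care is the matching of the generic parameter $X$ in the first theorem to the endsequent-dependent set $\Phi^{\Gamma,C}_{\infty}$, and this is immediate from the definitions. It is worth recording alongside this that $\Phi^{\Gamma,C}_{\infty}$ is finite whenever $\Gamma\seq C$ is a concrete sequent, since $\sbf(\y(\Gamma)\cup\{C\})$ is finite; this finiteness is what licenses the restriction to a fixed finite axiom set, and it is the property that feeds into the subsequent appeals to the decidability and context-freeness results of Buli\'nska.
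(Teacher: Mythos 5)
Your proposal is correct and matches the paper's (implicit) argument exactly: the corollary is obtained by combining the two preceding theorems, with the backward direction given by the first theorem instantiated at $X:=\sbf(\y(\Gamma)\cup\{C\})$ so that $\Phi^X_{\infty}=\Phi^{\Gamma,C}_{\infty}$, and the forward direction given directly by the second theorem. No further argument is needed.
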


\subsection{$\PTIME$ Decidability}

\begin{theorem}
If $S\subseteq\{c,w\}$ and $M\subseteq\{t,4\}$, then provability in $\SMMacLL$ is $\PTIME$.
\end{theorem}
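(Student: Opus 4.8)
The plan is to invoke the Soundness and Completeness corollary to reduce provability in $\SMMacLL$ to a single consequence problem in $\MacLL$, and then to bound the cost of both constructing the axiom set $\Phi^{\Gamma,C}_\infty$ and performing the final consequence check using Buli\'nska's polynomial-time decision procedure. Fix an input sequent $\Gamma\seq C$ and write $n$ for its size. By the corollary, $\SMMacLL\vdash\Gamma\seq C$ exactly when $\MacLL+\Phi^{\Gamma,C}_\infty\vdash\tau(\Gamma)\seq\tau(C)$, and the translation $\tau(\Gamma)\seq\tau(C)$ is computable in polynomial time since $\quot{\cdot}$ is efficiently computable. So it suffices to show that $\Phi^{\Gamma,C}_\infty$ can be computed in polynomial time and has polynomial size, after which one application of Buli\'nska's procedure decides the translated sequent.

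First I would bound the axiom set. Let $X=\sbf(\y(\Gamma)\cup\{C\})$; this set of subformulas has size $O(n)$ and is computable in polynomial time. Every sequent that can ever enter $\Phi^X_m$ for any $m$ is either one of the $O(n)$ base sequents of $\Phi^X_0$ or has the form $p_{2\quot A}\seq p_{2\quot B}$ with $!A,!B\in X$, of which there are at most $|X|^2=O(n^2)$. Hence the ascending chain $\Phi^X_0\subseteq\Phi^X_1\subseteq\cdots$ strictly increases at most $O(n^2)$ times before reaching its fixpoint, so $\Phi^{\Gamma,C}_\infty=\Phi^X_N$ for some $N=O(n^2)$, and the total size of $\Phi^{\Gamma,C}_\infty$ is polynomial in $n$.

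Next I would bound the cost of the iteration. To pass from $\Phi^X_m$ to $\Phi^X_{m+1}$ one examines each pair $!A,!B\in X$ and checks whether $\MacLL+\Phi^X_m$ proves the relevant sequent ($\tau(A)\seq\tau(B)$ if $4\notin M$, else $p_{2\quot A}\seq\tau(B)$). There are $O(n^2)$ such pairs, and each check is a $\MacLL$ consequence query over an axiom set of polynomial size, hence decidable in time polynomial in $n$ by Buli\'nska. Since there are $O(n^2)$ rounds, the construction of $\Phi^{\Gamma,C}_\infty$ runs in polynomial time overall. A final consequence query $\MacLL+\Phi^{\Gamma,C}_\infty\vdash\tau(\Gamma)\seq\tau(C)$, again polynomial by Buli\'nska, then decides provability in $\SMMacLL$ by the corollary.

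The main obstacle, and the place requiring care, is the fixpoint bound: one must observe that the only genuinely new axioms produced are atomic sequents $p_{2\quot A}\seq p_{2\quot B}$ indexed by pairs drawn from the fixed finite set $X$, which simultaneously caps the number of iterations and the size of each intermediate axiom set polynomially. This in turn relies on reading Buli\'nska's result as supplying a decision procedure whose running time is polynomial in the \emph{combined} size of the axiom set and the queried sequent, so that the growth of $\Phi^X_m$ across the $O(n^2)$ rounds does not destroy the polynomial bound.
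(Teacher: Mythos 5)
Your proposal is correct and follows essentially the same route as the paper: bound the candidate axioms $p_{2\quot A}\seq p_{2\quot B}$ by $N^2$ pairs of subformulas, conclude stabilization of $\Phi^{\Gamma,C}_n$ in at most $N^2$ rounds with each round consisting of polynomially many Buli\'nska consequence queries over a polynomially bounded axiom set, and finish with soundness/completeness plus one final query. Your explicit remark that Buli\'nska's bound must be polynomial in the combined size of the axiom set and the sequent matches the paper's ``parametric on the axiom set'' caveat.
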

\begin{proof}
Fix some $\Gamma\seq C$ with $N$ subformulas noting that the number of subformulas is linear in the complexity of the sequent.  We give an explicit polynomial time algorithm, starting with a construction of $\Phi^{\Gamma,C}_{\infty}$.

Note that $\Phi^{\Gamma,C}_0$ can be constructed in linear time.

There are at most $N^2$ axioms of the form $p_{2\quot A}\seq p_{2\quot B}$ where $!A$ and $!B$ are subformulas of formulas in $\Gamma\seq C$.  To construct $\Phi^{\Gamma,C}_{n+1}$ from $\Phi^{\Gamma,C}_n$, we will check membership of each $p_{2\quot A}\seq p_{2\quot B}$ individually.  By the result of Buli\'nska \cite{bulinska2009}, we can check provability in $\MacLL+\Phi^{\Gamma,C}_n$ in polynomial time, parametric on the axiom set and $|\Phi^{\Gamma,C}_n|\leq N^2+3N$.  Thus, construction of each successive $\Phi^{\Gamma,C}_n$ requires only polynomial time.

Since at most $N^2$ new axioms can be added, $\Phi^{\Gamma,C}_n$ stabilizes in at most $N^2$ stages.  Thus, running until stabilization takes polynomial time, and yields $\Phi^{\Gamma,C}_{\infty}$.

By soundness and completeness, $\SMMacLL\vdash\Gamma\seq C$ if and only if $\MacLL+\Phi^{\Gamma,C}_{\infty}\vdash\tau(\Gamma)\seq\tau(C)$, and again using the result of Buli\'nska \cite{bulinska2009}, we can check if $\MacLL+\Phi^{\Gamma,C}_{\infty}$ proves $\tau(\Gamma)\seq\tau(C)$ in polynomial time.
\end{proof}

\section{Context Freeness}
Appealing to the same paper of Buli\'nska \cite{bulinska2009}, we pin down exactly the recognizing strength of $\SMMacLL$ as a categorial grammar \cite{Lambek1958,moortgat1997categorial}.

\begin{theorem}
For $S\subseteq\{c,w\}$ and $M\subseteq\{t,4\}$, then $\SMMacLL$ recognizes exactly the context free languages as a categorial grammar.
\end{theorem}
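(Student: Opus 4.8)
The plan is to deduce context-freeness from the soundness and completeness corollary together with Buli\'nska's theorem, the decisive move being to replace the sequent-dependent axiom set $\Phi^{\Gamma,C}_\infty$ by a single finite axiom set determined solely by the grammar. Recall that a categorial grammar over $\SMMacLL$ consists of a finite lexicon $L$ assigning to each terminal a finite set of types in $!\F$, together with a distinguished type $s$; a string $a_1\cdots a_n$ is recognized precisely when there are types $C_i\in L(a_i)$ and a structure $\Gamma$ with yield $C_1,\dots,C_n$ such that $\SMMacLL\vdash\Gamma\seq s$. Since $L$ is finite, I would set $X:=\sbf\!\big(\bigcup_a L(a)\cup\{s\}\big)$, which is a fixed finite set of formulas, and argue both inclusions against this single $X$.

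For the inclusion \emph{recognized $\subseteq$ context free}, I would first show that provability against the fixed axiom set $\Phi^X_\infty$ suffices. For any structure $\Gamma$ built from lexical types and the goal $s$, we have $\y(\Gamma)\subseteq\bigcup_a L(a)$, hence $\sbf(\y(\Gamma)\cup\{s\})\subseteq X$ and therefore $\Phi^{\Gamma,s}_\infty\subseteq\Phi^X_\infty$. Combining the completeness theorem with monotonicity of provability under enlarging the axiom set yields the forward direction, while the soundness theorem (which is stated for an arbitrary $X$) yields the converse; together they give, for every such $\Gamma$,
\[
\SMMacLL\vdash\Gamma\seq s
\quad\iff\quad
\MacLL+\Phi^X_\infty\vdash\tau(\Gamma)\seq\tau(s).
\]
Because $\tau$ is injective and acts structurally, the lexicon $a\mapsto\{\tau(C):C\in L(a)\}$ with distinguished type $\tau(s)$ is a categorial grammar over $\MacLL+\Phi^X_\infty$ recognizing exactly the same language. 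As $\Phi^X_\infty$ is now a single finite set of nonlogical axioms, Buli\'nska's theorem \cite{bulinska2009} shows this language is context free.

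For the reverse inclusion \emph{context free $\subseteq$ recognized}, I would invoke the classical result on nonassociative Lambek grammars \cite{Kandulski1988TheEO}: every context free language is recognized by a pure $\MacLL$ grammar, and such a grammar may be taken to use only $!$-free types. For such a lexicon the set $X$ contains no formula of the form $!A$, so $\Phi^X_0=\emptyset$ and hence $\Phi^X_\infty=\emptyset$; thus $\MacLL+\Phi^X_\infty=\MacLL$ and $\tau$ merely relabels odd-indexed propositional variables. Since a bijective renaming of atoms preserves $\MacLL$-provability, the equivalence above collapses to $\SMMacLL\vdash\Gamma\seq s\iff\MacLL\vdash\Gamma\seq s$ on these sequents, so the same grammar read as an $\SMMacLL$ grammar recognizes the same context free language. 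The two inclusions then give the stated equality.

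The step I expect to require the most care is the uniformization in the second paragraph. The soundness and completeness corollary is stated relative to an axiom set $\Phi^{\Gamma,C}_\infty$ that depends on the very sequent being tested, whereas Buli\'nska's theorem demands one fixed finite axiom set for the entire grammar. The resolution rests on the finiteness of the lexicon, which confines all relevant subformulas to $X$; I would need to check carefully that passing from $\Phi^{\Gamma,s}_\infty$ to the larger $\Phi^X_\infty$ preserves derivability in both directions, using monotonicity for one direction and the fact that the soundness theorem already holds for an arbitrary $X$ for the other.
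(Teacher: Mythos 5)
Your proposal is correct and follows essentially the same route as the paper: both directions hinge on uniformizing the sequent-dependent axiom set over the finite set of subformulas of the lexicon and the distinguished type (your $X$ is the paper's $L$), applying the soundness and completeness results to transfer recognition to $\MacLL$ plus that single finite axiom set, and then invoking Buli\'nska for context-freeness; your explicit monotonicity check $\Phi^{\Gamma,s}_\infty\subseteq\Phi^X_\infty$ is a point the paper glosses over but that does hold. The only divergence is in the easy inclusion, where the paper cites conservativity of $\MacLL$ in $\SMMacLL$ (via cut admissibility) directly, while you derive the same fact through the translation with $\Phi^X_\infty=\emptyset$ and an atom renaming --- a harmless variation.
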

\begin{proof}
By cut admissibility, $\MacLL$ is a conservative fragment of $\SMMacLL$.  Thus, by \cite{Kandulski1988TheEO}, we know that $\SMMacLL$ recognizes at least the context free languages.  We now show that it recognizes only the context free languages.

Fix some finite alphabet $\Sigma$, some formula $s\in!\F$, and some lexicon, i.e. $\Lex:\Sigma\to\Pfin(!\F)$.  Then define

\[
L:=\bigcup_{a\in\Sigma}\sbf(\Lex(a))\cup\sbf(s)
\]

\noindent to be the set of subformulas of all formulas appearing in the categorial grammar, and note that it is finite.  Therefore, $\Phi^L_{\infty}$ is also finite.

Further, define $\tau\Lex:\F\to\Pfin(\F)$ by $\tau\Lex(a):=\tau(\Lex(a))$.  We will consider this as a lexicon for $\MacLL+\Phi_{L}$.

Pick an arbitrary string $a_1\cdots a_n$.  We will show that it is recognized by $\SMMacLL$ with $\Lex$ and $s$ iff it is recognized by $\MacLL+\Phi^L_{\infty}$ with $\tau\Lex$ and $\tau(s)$.

Say $\SMMacLL$ recognizes $a_1\cdots a_n$, then there are $A_i\in\Lex(a_i)$ and a bracketing $\Gamma$ of $A_1,...,A_n$ such that $\SMMacLL\vdash\Gamma\seq s$.  Note that all subformulas of $\Gamma\seq s$ are in $L$, so by our soundness and completeness we have $\tau(\Gamma)\seq\tau(s)$.  Since $\tau(\Gamma)$ is a bracketing of $\tau(A_1),...\tau(A_n)$, and by definition we have $\tau(A_i)\in\tau\Lex(a_i)$, this tells us that $\MacLL+\Phi_{L}$ recognizes $a_1\cdots a_n$, proving one direction.

Now say that $\MacLL+\Phi^L_{\infty}$ recognizes $a_1\cdots a_n$.  By definition there are $\tau(A_i)\in\tau\Lex(a_i)$ (with $A_i\in\Lex(a_i)$) and a bracketing $\tau(\Gamma)$ of $\tau(A_1),...,\tau(A_n)$ such that $\MacLL+\Phi^L_{\infty}\vdash\tau(\Gamma)\seq\tau(s)$.  Again by soundness and completeness, $\SMMacLL\vdash\Gamma\seq s$.  Note that $\Gamma$ is a bracketing of $A_1,...,A_n$ and thus $\SMMacLL$ recognizes $a_1\cdots a_n$.

Thus, $\SMMacLL$ recognizes exactly the same strings as $\MacLL+\Phi^L_{\infty}$.  We have by Buli\'nska \cite{bulinska2009} that $\MacLL+\Phi^L_{\infty}$ recognizes only context free languages, so $\SMMacLL$ must also recognize only context free languages.
\end{proof}

\section{Related and Future Work}
This paper is purely multiplicative.  It is unclear how these results extent to systems with additives.  By work of Kuznetsov \cite{kuznetsov2013}, we know that in the associative noncommutative system, the addition of additives expands the recognizing power beyond context-free grammars.  If in addition we include the distribution axiom $A\wedge(B\vee C)\seq (A\vee B)\wedge (A\vee C)$, Buszkowski and Farulewski show that the categorial grammars remain context-free \cite{buszkowskiDFNL}.

In another direction, the exponential of linear logic is not canonical \cite{danos93kgc}.  The addition of multiple exponentials to a logical system \cite{miller09subexponentials} is incredibly flexible.  The arguments in this paper likely extend to related multimodal systems.

Pinning down the computational complexity and recognizing power of the systems of this paper in the presence of additives and subexponentials remains open.

After completion of this work, the author became aware of an abstract by Wang, Ge, and Lin \cite{wang2023}.  They prove $\PTIME$ decidability of a similar but different logic.  They consider a logic equivalent to $!^{c,w}_{k,t,4}\MacLL$, which is in particular not light.  Neither decidability follows directly from the other.

%
%
%

%
%
%
%
\bibliographystyle{splncs04}
\bibliography{biblio}

\end{document}